\documentclass{article}

\newcommand{\sv}[1]{}%
\newcommand{\lv}[1]{#1}%
\usepackage{etoolbox}
\usepackage{tikz}

\usepackage[T1]{fontenc}
\usepackage[utf8]{inputenc}
\usepackage{graphicx} %
\usepackage{hyperref}
\usepackage{amsmath,amssymb,amsthm,mathtools,amsfonts}
\usepackage[capitalize,noabbrev]{cleveref}
\usepackage{xcolor}
\lv{\usepackage{fullpage}}
\usepackage{complexity}

\tikzset{
    standard edge/.style={very thick, line join=round},
    red edge/.style={standard edge, red},
    blue edge/.style={standard edge, blue},
    black edge/.style={standard edge},
    vertex/.style={circle, draw, fill=black, inner sep=0pt, minimum size=5pt}
}

\newtheorem{theorem}{Theorem} 
\newtheorem{lemma}[theorem]{Lemma}
\newtheorem{problem}[theorem]{Problem}
\crefname{problem}{Problem}{Problems}

\newcommand{\Vars}{\mathsf{Vars}}
\newcommand{\Clauses}{\mathsf{Clauses}}

\newcommand{\problembox}[4]{%
\begin{center}
\begingroup
\setlength{\fboxsep}{8pt}%
\setlength{\fboxrule}{0.6pt}%
\fbox{%
\begin{minipage}{0.90\linewidth}
  \begin{problem}[#1]\label[problem]{#2}
\leavevmode\par\vspace{0.6em}
\noindent\textbf{Input:}\quad\quad\, #3

\vspace{0.3em}
\noindent\textbf{Question:}\quad #4
\end{problem}
\end{minipage}}
\endgroup
\end{center}
}

\title{Edge-decomposition into Two Triangular Forests is \NP-complete\thanks{Both authors were supported by Polish National Science Centre SONATA-17 grant number 2021/43/D/ST6/03312.}}
\lv{%
\usepackage{authblk}
  \author[1]{Beniamin Bibrowski}%
  \author[1]{Tomáš Masařík}%

  \affil[1]{Institute of Informatics, Faculty of Mathematics, Informatics and Mechanics, University~of~Warsaw, Warszawa, Poland 

 \texttt{b.bibrowski@student.uw.edu.pl}
  \texttt{masarik@mimuw.edu.pl}}
}
\date{}

\begin{document}

\maketitle

\begin{abstract}
Let $\mathcal F$ be a graph class that is closed under topological minors and 1-sums, has decidable membership, contains a triangle, and is not the class of all graphs.
Recently, Lee, Liu, and Tsai [ICALP 2026] showed that the edge-decomposition problem into $k \geq 3$ elements of $\mathcal F$ is \NP-hard.
In particular, their general hardness reduction covers a long-standing problem on outerthickness (when $\mathcal F$ is the class of outerplanar graphs).
On the other hand, it is well known that decomposing a graph into forests is polynomial-time solvable, as implied by work of Edmonds [J.\ Res.\ Natl.\ Bur.\ Stand.\ B.\ 1965].

In this paper, we take a first step toward determining the complexity of edge-decomposition problems into just two graphs (the case $k=2$).
We consider the simplest possible graph class $\mathcal F$ satisfying the criteria above: the \emph{triangular forests}, that is, graphs in which every 2-connected component is a triangle.
We prove that determining whether a graph can be edge-decomposed into two triangular forests is \NP-complete.

\begin{center}
\textbf{keywords:} \NP-hardness, triangular forest, edge-decomposition problems
\end{center}
\end{abstract}

\section{Introduction}

Edge-decomposition of a graph into the minimum number of planar graphs (known as graph \emph{thickness}) was among the first problems to be shown \NP-complete by an elaborate reduction from a variant of the 3-SAT problem~\cite{Mansfield1983}.
The lesser-known problem of determining the complexity of edge-decomposing a graph into the minimum number of outerplanar graphs (graph \emph{outerthickness}) remained open for many years and was settled only recently~\cite{outerthickness}.
In fact, Lee, Liu, and Tsai~\cite{outerthickness} show \NP-completeness of a much broader set of edge-decomposition problems.
For every positive integer $n$, we write $[n]\coloneqq\{1,\ldots,n\}$.
We define the general \emph{$\mathcal F$-edge-decomposition} decision problem formally as follows:
\problembox
  {$\mathcal F$-Edge-Decomposition}
  {prob:Fdecomp}
  {An undirected simple graph $G=(V,E)$ and an integer $k\geq 1$.}
  {Is there a partition $E=E_1\cup \cdots \cup E_k$ such that $(V,E_i)\in \mathcal F$ for every $i\in[k]$?}
Note that in \cref{prob:Fdecomp} empty parts in the decomposition are allowed.
In \cite{outerthickness}, the problem was formulated as an edge-covering problem.
However, for monotone classes, and in particular for classes closed under topological minors, the covering and partitioning formulations are equivalent.

The main theorem of~\cite{outerthickness}, together with the subsequent membership observation, states the following.

\pagebreak[2]
\begin{theorem}[{Lee, Liu, and Tsai~\cite[Theorem~1.2]{outerthickness}}]
\label{thm:llt}
Let $\mathcal F$ be a proper graph class.
Suppose that $\mathcal F$:
\begin{itemize}
\item has decidable membership,
\item is closed under topological minors,
\item is closed under 1-sums, and
\item contains a cycle of length $3$.
\end{itemize}
Then, for every fixed integer $k\geq 3$, $\mathcal F$-\textsc{Edge-Decomposition} is \NP-hard.
Moreover, if membership in $\mathcal F$ can be decided in polynomial time, then $\mathcal F$-\textsc{Edge-Decomposition} is \NP-complete.
\end{theorem}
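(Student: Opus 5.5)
This theorem is cited from Lee, Liu, and Tsai, so only a plan is in order. For $k\ge 3$ I would reduce from \textsc{Graph $k$-Colourability}, which is \NP-complete for every fixed $k\ge 3$, using gadgets built from a minimal excluded graph of $\mathcal F$.

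First I extract that graph. Since $\mathcal F$ is proper and closed under topological minors, some graph $H$ lies outside $\mathcal F$. Choose a topologically-minor-minimal such $H$. Because $\mathcal F$ is closed under 1-sums, and contains the triangle and hence all its topological minors (in particular single edges and paths), $H$ must be 2-connected. Otherwise $H$ would be a 1-sum of smaller graphs that lie in $\mathcal F$ by minimality.

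Next I use $H$ to force structure. Take a disjoint union of $k$ suitably subdivided copies of $H$, glued appropriately, or more concretely a "clique of copies" of $H$. In any $k$-decomposition, every copy of $H$ must meet at least two colour classes. This lets me build two gadgets:
\begin{itemize}
\item an \emph{edge-forcing gadget}, which guarantees that two designated edges receive different classes;
\item a \emph{vertex gadget}, a bundle of edges that must all receive the same class.
\end{itemize}
The triangle assumption supplies flexibility. Triangles and their 1-sums lie in $\mathcal F$, so gadgets that are 1-sums of triangles can be given any colouring consistent with the constraints without creating a forbidden topological minor. I expect this to be the main obstacle: showing the gadgets admit \emph{some} valid decomposition, and that attaching them at cut-vertices (1-sums) keeps every class inside $\mathcal F$. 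Closure under 1-sums is exactly what makes this gluing safe.

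The reduction then replaces each vertex $v$ of the colouring instance by a vertex gadget with a designated edge $e_v$, and each edge $uv$ by a forcing gadget requiring $e_u$ and $e_v$ to lie in different classes. A proper $k$-colouring then corresponds to a valid $k$-decomposition, with every gadget attached via 1-sums.

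For the second sentence, membership in \NP{} is immediate. The certificate is the partition, and each class is checked with the polynomial-time membership test.
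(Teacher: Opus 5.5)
\textbf{The paper does not prove this statement.} It is quoted from Lee, Liu, and Tsai, and the paper's own work concerns only $k=2$ and triangular forests. Judged on its own, your sketch has a genuine gap: the whole content of the theorem lies in the edge-forcing and vertex gadgets, and you construct neither. The only constraint you extract from $H$ is that no subdivision of $H$ may be monochromatic. That is merely a not-all-equal condition on the edge set of each copy. You give no argument that, with $k\ge 3$ classes, it can be turned into ``these two edges differ'' or ``this bundle is monochromatic''. What is missing is a rigid piece whose decompositions are forced up to permuting the classes, as $K_5$ is for $k=2$ in the paper via the Chartrand--Schuster analysis (\cref{lem:bull}). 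Finding such a piece for an arbitrary $\mathcal F$ is exactly the difficulty. In the completeness direction you may only use classes known to lie in $\mathcal F$, namely 1-sums of triangles and of proper topological minors of $H$. In the soundness direction you may essentially only use that $H$ is not a topological minor of any class.

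\textbf{The gluing mechanism you rely on is also self-defeating.} Suppose $G=G_1\cup G_2$ with $V(G_1)\cap V(G_2)=\{x\}$. Since $\mathcal F$ is closed under topological minors (hence subgraphs) and 1-sums, the valid $k$-decompositions of $G$ are exactly the pairs formed by a valid decomposition of $G_1$ and one of $G_2$. Restricting a class to $G_j$ gives a subgraph, and any two decompositions combine class by class into 1-sums at $x$. Hence gadgets attached at cut-vertices transmit no information whatsoever. The answer for an instance assembled by 1-sums is just the conjunction of the answers for its pieces, each of which encodes a single vertex or edge of the input. So no global constraint such as a proper colouring can be expressed this way.

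The gadgets must therefore interact through shared edges or at least two shared vertices, as the inside triangles and the identifications $z_r=a_{r+1}$ do in the paper. But then closure under 1-sums no longer gives the completeness direction for free. You would have to show that each colour class decomposes into 1-sums of graphs known to lie in $\mathcal F$, even though the graph itself is 2-connected. The paper does exactly this for $k=2$ by building $G_{\mathrm{Red}}$ from $B_{\mathrm{Red}}$ through 1-sums with triangle chains. Your \NP-membership argument is fine.
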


In this paper, we take a first step toward extending \cref{thm:llt} to the case of $k=2$.
In fact, we show \NP-completeness for the simplest possible graph class $\mathcal F$ satisfying the properties imposed by \cref{thm:llt}.
Indeed, any such class must contain $K_3$; closure under 1-sums then forces it to contain all graphs obtained by gluing copies of $K_3$ at cutvertices, and closure under topological minors forces it to contain all topological minors of those graphs.
The resulting graph class is exactly the class of \emph{triangular forests}.
These are graphs in which every 2-connected component is a triangle, equivalently graphs with no cycle of length at least four.
Thus triangular forests form the smallest graph class, under inclusion, satisfying the requirements imposed by \cref{thm:llt}.

\problembox
  {\textsc{Triangular-Forest Decomposition}}
  {prob:TF}
  {An undirected simple graph $G=(V,E)$ and an integer $k\geq 1$.}
  {Is there a partition $E=E_1\cup \cdots \cup E_k$ such that $(V,E_i)$ is a triangular forest for every $i\in[k]$?}

We prove the following.
\begin{theorem}
\label{thm:mainGeneral}
For every fixed integer $k\ge 2$, \textsc{Triangular-Forest Decomposition} is \NP-complete.
\end{theorem}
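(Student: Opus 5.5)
Membership in \NP{} is immediate: a certificate is the partition $E_1\cup\dots\cup E_k$, and checking that each $(V,E_i)$ has no cycle of length at least four (equivalently, every block is a $K_2$ or a $K_3$) takes polynomial time. For $k\ge 3$, hardness follows directly from \cref{thm:llt}, since triangular forests form a proper class with polynomial-time membership that is closed under topological minors and 1-sums and contains $K_3$. So the real content is $k=2$. I will reduce from a variant of 3-SAT, for example \textsc{Positive Planar 1-in-3-SAT} or \textsc{NAE-3SAT}; NAE is attractive because swapping the two colour classes is a symmetry of the problem.

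I will speak of 2-colourings of $E(G)$ (red/blue) in which each colour class is a triangular forest. The first step is to build rigid gadgets. A key fact is that $|E_i|\le \tfrac32(|V|-1)$ for a triangular forest on $|V|$ vertices, so any graph with $3(|V|-1)$ edges is tight: every decomposition splits it into two spanning connected triangular forests, each having exactly $(|V|-1)/2$ triangles. Using such tight graphs (for instance small dense graphs on $5$ or $7$ vertices, checked by hand or computer), I will design:
\begin{itemize}
\item an \emph{edge-forcing gadget} that attaches to a pair $u,v$ and forces $u,v$ to be joined by a red path and by a blue path inside the gadget, so that no additional red or blue $u$--$v$ path outside may create a long cycle;
\item a \emph{wire/variable gadget}, a chain of triangles whose colouring is forced to alternate, so that there are exactly two valid colourings, encoding true and false, with the value propagated along the chain.
\end{itemize}
The counting argument shows local rigidity; gluing gadgets at cut vertices preserves validity because blocks of each colour class stay inside gadgets, unless the gluing creates new cycles through several gadgets. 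I will therefore control connectivity so that any monochromatic cycle through two gadgets has length at least $4$, and use this as the forcing mechanism.

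Next comes the \emph{clause gadget}. Three wire ends meet at a small central structure, for example a triangle $xyz$ whose edges are coloured according to the three literals, with extra connections so that an all-equal colouring produces a monochromatic long cycle (or a monochromatic $C_4$) or a vertex with too many incident monochromatic triangles, while every not-all-equal pattern extends to a valid colouring. I would find this gadget by small case analysis, possibly computer-aided, verifying all $2^3$ input patterns.

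Finally, I prove correctness. Given a satisfying NAE assignment, I colour each gadget accordingly and check globally that each colour class has only triangular blocks; this holds because inter-gadget connections are via cut vertices or via paths whose two colours are consistent. Conversely, the rigidity lemmas read off an assignment, and the clause gadgets certify that it is NAE-satisfying.

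The main obstacle I expect is the \emph{non-local} constraint. The property of having no cycle of length at least $4$ is global, so monochromatic paths through different gadgets may combine into long cycles. Designing the attachments so that they neither create spurious cycles in the intended colourings nor permit unintended colourings is the delicate part. I would first try to attach everything through cut vertices (1-sums), with only single edges shared between gadgets, and rely on the tightness counting to propagate information across those edges.
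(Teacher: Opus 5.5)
\textbf{Genuine gap for $k=2$.} Your treatment of membership in \NP{} and of $k\ge 3$ via \cref{thm:llt} is exactly what the paper does. For $k=2$, however, which is the whole substance of the theorem, no reduction is actually given. The variable, edge-forcing and clause gadgets are only postulated (``I would find this gadget by small case analysis''), and the ingredients you do name would not produce them:
\begin{itemize}
\item \emph{Cut-vertex gluing transmits nothing.} Every cycle of $G$ lies in a single block, so admissibility is a blockwise property. Gadgets glued only at cut vertices therefore have completely independent colourings.
\item \emph{A chain of triangles forces nothing.} It is itself a triangular forest, and the class is closed under subgraphs, so every 2-colouring of it is admissible; nothing can be forced to alternate.
\item \emph{Your clause obstructions are not obstructions.} Many monochromatic triangles at one vertex are allowed (friendship graphs are triangular forests). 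On a 3-clause, an all-equal pattern on a triangle $xyz$ only yields a monochromatic triangle, which is also allowed.
\item \emph{The tight graphs you suggest do not exist.} No graph on $5$ vertices has $3(5-1)=12$ edges. On $7$ vertices, each colour class would have to be a spanning cactus of three triangles. A friendship centre would need total degree at least $8>6$, and two chains must have disjoint joints. This degree count leaves only $K_7$ minus a triangle. There the first chain is forced up to symmetry, and its complement contains a $C_6$.
\end{itemize}

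\textbf{What the paper supplies instead.} Rigidity does not come from edge counting; $K_5$ is not tight. It comes from \cref{lem:bull}: by the Chartrand--Schuster analysis, every admissible colouring of $K_5$ splits it into two spanning bulls, and every Red horn meets every Blue horn in exactly one vertex. Consecutive copies of $K_5$ share only a vertex $v_i$, but the cross edge $u_iu'_i$ makes the chain 2-connected. The inside triangle $T_i$ is then forced to be monochromatic with both clique edges as horns (\cref{lem:triangle,lem:variable}). The horn-intersection property then propagates a single colour $c(V)$ along the whole variable gadget. Attachment triangles give a $\neg{c(V)}$-coloured tip-to-tip path of length at least $4$ (\cref{lem:ClauseTriangles}). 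This forces a $c(V)$-coloured path of length at most $2$ from $w_{2i-1}$ to $z_i$ inside the central triangle (\cref{lem:centraltriag}). The paper reduces from \textsc{Monotone NAE-4-SAT} and closes four such paths into the cycle $a_1a_2a_3a_4$. An all-equal clause then yields a monochromatic subdivided $C_4$, which is precisely why 4-clauses are used rather than 3-clauses. The converse direction is an explicit colouring whose admissibility is checked globally by 1-sum arguments. Without a concrete rigid gadget of this kind and such a global check, your outline does not establish hardness for $k=2$.
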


Triangular forests have also appeared in the context of vertex-partitions.
Knauer, Rambaud, and Ueckerdt~\cite{KnauerRambaudUeckerdt2024} proved that every planar graph admits a vertex-partition into two sets, each inducing a triangular forest; they also point out that this result follows from earlier work of Thomassen~\cite{Thomassen1995}.

It is well known that edge-decompositions into forests are polynomial-time solvable~\cite{NashWilliams64}.
In fact, edge-decompositions into pseudoforests are still polynomial-time solvable.
A \emph{pseudoforest} is a graph in which every connected component contains at most one cycle, and the corresponding edge-decomposition problem asks whether the edge set of the input graph can be partitioned into $k$ pseudoforests.
These results are also implied by the matroid-partition theorem of Edmonds~\cite{Edmonds1965}.

However, for more restrictive graph classes, the decomposition problem may become hard.
For example, the \emph{star arboricity} problem asks for an edge-partition into \emph{star forests} (forests of star graphs).
The decision version asking whether an edge-decomposition into two star forests exists is \NP-complete by a result of Hakimi, Mitchem, and Schmeichel~\cite{starForestHard}.
Similarly, the \emph{linear arboricity} problem asks for an edge-partition into \emph{linear forests} (forests whose connected components are paths).
P\'eroche~\cite{Peroche82} showed that deciding whether the linear arboricity of a graph is two is \NP-complete, even for graphs of maximum degree four.
More specifically, Campbell, H\"orsch, and Moore~\cite{CHM24} studied decompositions into two linear forests with bounded component lengths, proving several \NP-completeness results and one polynomial-time case.
Here, a $k$-bounded linear forest is a linear forest in which every connected component is a path with at most $k$ edges.
This direction was further extended by Banerjee, Marciano, Mond, Petr, and Portier~\cite{BMMPP25}, who showed that deciding whether a graph can be edge-decomposed into a matching and a $k$-bounded linear forest is \NP-complete for every $k\ge 3$.

There are also related partitioning problems in which connectivity of parts is prescribed.
P\'alv\"olgyi~\cite{Palvolgyi10} proved that deciding whether a simple graph can be edge-partitioned into two trees is \NP-complete.
Bern\'ath and Kir\'aly~\cite{BK15} studied a systematic family of packing, covering, and partitioning problems involving paths, forests, trees, and spanning trees.
For instance, they showed that edge-partitioning into a spanning tree and a tree is \NP-complete, while edge-partitioning into a spanning tree and a forest is polynomial-time solvable.
For decompositions into paths, deciding whether a graph of maximum degree four admits an edge-decomposition into two paths is \NP-hard by a result of P\'eroche~\cite{Peroche84}, while determining the number of paths in the decomposition becomes polynomial-time solvable on cubic graphs~\cite{MWY26}.

Another classical line of research considers edge-partitions into fixed subgraphs.
A central result in this direction is due to Holyer~\cite{Holyer1981}.
He proved that, for every fixed $n\ge 3$, deciding whether a graph can be edge-partitioned into copies of $K_n$ is \NP-complete.

\subsection{Our Result}

As discussed, the cases $k\ge3$ are implied by \cref{thm:llt}.
Hence, we focus on the case $k=2$.
Moreover, membership in \NP\ is trivial, as one can easily check in polynomial time that each graph in the partition is indeed a triangular forest.
Lastly, we can impose a degree restriction on the graphs in our construction.
Hence, by proving the following theorem, we conclude~\cref{thm:mainGeneral}.

\begin{theorem}\label{thm:main}
Even when restricted to graphs of maximum degree at most $9$, it is \NP-hard to decide whether a given graph can be edge-decomposed into two triangular forests.
\end{theorem}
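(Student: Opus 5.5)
We reduce from a bounded-occurrence satisfiability problem, namely 3-SAT restricted so that every variable occurs in at most three clauses (or, if convenient, Planar/positive 1-in-3-SAT). A decomposition into two triangular forests is viewed as a red/blue edge-coloring in which neither color class contains a cycle of length at least four. Equivalently, every monochromatic cycle is a triangle, and no two monochromatic triangles of the same color share an edge. The reduction is local. We build gadgets whose colorings are forced up to swapping colors, and we use this forced structure to transmit boolean values along ``wires''.

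\textbf{Step 1: rigid gadgets.} We start with small dense graphs such as $K_5$, or $K_4$ with pendant structure. Note that $K_5$ has $10$ edges and two triangular forests on $5$ vertices have at most $6$ edges each, so $K_5$ decomposes only in constrained ways. The goal is to find a ``forcing gadget'' $H$ with a designated edge or path whose color pattern is determined in every valid coloring. We enumerate the decompositions of such a small graph by case analysis, in the paper possibly computer-assisted. For instance, the decompositions of $K_5$ into two triangular forests are exactly those in which each color is a spanning tree, or a triangle-forest with $5$ edges. From these we extract a gadget that behaves like a \emph{red-forcing} attachment: attaching it at two vertices $u,v$ forces $u$ and $v$ to be joined by a monochromatic path of a known color within the gadget. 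Any further path of that color between $u$ and $v$ outside the gadget would then close a long cycle, so the gadget forbids one color on certain external connections.

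\textbf{Step 2: wires, variables, and clauses.} A variable wire is a chain of triangles $T_1,T_2,\dots$ in which consecutive triangles share a vertex, together with connecting edges supported by forcing gadgets. The gadgets are arranged so that each triangle must be entirely red or entirely blue, and consecutive triangles must alternate or agree. The color of the first triangle then encodes the truth value. A clause gadget joins three literal terminals so that the all-false pattern creates a monochromatic cycle of length at least four, for example a $C_4$ or $C_5$ through the three terminals and a forced path. Every other pattern must be completable. Negations are handled by the parity of the wire length. Since each variable appears a bounded number of times, each vertex meets only boundedly many gadgets, and counting the incidences keeps the maximum degree at most $9$.

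\textbf{Step 3: correctness.} Soundness: a satisfying assignment gives a coloring. We color every wire according to its variable and complete each gadget using a precomputed local coloring. We then check that every monochromatic cycle stays inside a single gadget, since gadgets meet only at cut-vertices or at forced-color interfaces. Completeness: from any decomposition, the forced wire colors yield an assignment, and the clause gadgets guarantee that it is satisfying.

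\textbf{Main obstacle.} The hardest part is global. Monochromatic long cycles can pass through several gadgets, and triangular forests are not closed under gluing along two vertices. The interfaces must therefore be designed so that, in one color, each gadget attaches at a single cut-vertex. I would first aim for gadgets whose external connections in a given color always go through one vertex, so that 1-sum closure handles the composition, and only then verify the clause gadget's exhaustive local case analysis.
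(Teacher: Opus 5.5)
There is a genuine gap at the clause gadget, and it comes from your choice of source problem. You encode truth by the \emph{absolute} color of a wire, but exchanging Red and Blue maps every admissible coloring to an admissible coloring. So if the all-false terminal pattern (say all Blue) cannot be completed, neither can the all-true pattern (all Red). This contradicts your requirement that every other pattern be completable.

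Globally, the assignment read off from the swapped coloring is the complement $\bar\sigma$. Your completeness direction would therefore force both $\sigma$ and $\bar\sigma$ to satisfy the formula, i.e., $\sigma$ would have to NAE-satisfy every clause. Hence your reduction sends every satisfiable but not NAE-satisfiable 3-CNF formula to a no-instance. Positive 1-in-3-SAT fails for the same reason. You need a complement-closed source problem, or a global reference wire against which every literal is measured.

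The paper reduces from Monotone NAE-4-SAT, and the arity $4$ matters. The four clause edges $a_1a_2,a_2a_3,a_3a_4,a_4a_1$ form a $C_4$, so an all-equal clause yields a monochromatic subdivided $C_4$. Each variable only guarantees a $c(V)$-colored path of length at most $2$ between consecutive clause vertices. With three variables, an all-equal clause could therefore close merely a triangle, which is allowed.

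Apart from this, the proposal contains none of the forcing arguments the proof rests on, and its one concrete claim is wrong. No color class of an admissible coloring of $K_5$ can be a spanning tree: the other class would then have $6$ edges and so be the bowtie, whose complement is a $C_4$ plus an isolated vertex. The correct statement is that both classes have exactly $5$ edges and are complementary spanning bulls, with every Red horn meeting every Blue horn in exactly one vertex.

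That horn-crossing property drives the paper's argument, in this chain:
\begin{enumerate}
\item Inside triangles joining consecutive $K_5$'s must be monochromatic, with their clique edges as horns, so all inside triangles of a variable share one color $c(V)$.
\item Attachment triangles built on $e_{4i-2}$ and $e_{4i}$, which cannot be $c(V)$-colored horns, then force a $\neg c(V)$-colored path of length at least $4$ between the tips.
\item That path in turn forces a $c(V)$-colored path of length at most $2$ from $w_{2i-1}$ to $z_i$ through the central triangle.
\end{enumerate}

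Your wires that ``alternate or agree'', negation by parity, clause completability and the degree bound $9$ are all asserted without any gadget realizing them. The converse direction likewise needs an explicit coloring plus a check. In the paper this is done via 1-sums: each maximal run of same-colored central triangles meets the rest of that color class in a single vertex, and this is exactly where the NAE condition is used.
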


We prove our theorem by a reduction from the \textsc{Monotone NAE-4-SAT} problem, which was shown to be \NP-hard by Schaefer~\cite{Schaefer1978}; see also Phelps and Rödl~\cite{PR} for the hardness in the equivalent language of 2-colorability of 4-uniform hypergraphs.

\problembox
  {\textsc{Monotone NAE-4-SAT} (Monotone Not-All-Equal-4-SAT)}
  {prob:MNAE4SAT}
  {A formula that consists of a set $\Vars$ of Boolean variables and a set $\Clauses$ of clauses, each composed of exactly four distinct variables from $\Vars$.}
  {Is there a valuation $\sigma\colon \Vars\to\{\top,\bot\}$ such that, for every clause $C\in\Clauses$, at least one variable of $C$ is assigned $\top$ and at least one variable of $C$ is assigned $\bot$?}

We regard $\Clauses$ as a set. If an input encoding contains duplicate clauses, they can be deleted without changing whether the formula is NAE-satisfiable.

\section{Proof of \cref{thm:main}}

In this section, we interpret a decomposition into two triangular forests as a Red/Blue edge-coloring such that neither monochromatic subgraph contains a cycle of length at least $4$.
We call such an edge-coloring \emph{admissible}.

To build our reduction, we first analyze the fundamental properties of a $K_5$ clique decomposed into two triangular forests as depicted in \cref{fig:k5_bull_decomposition}.
Before analyzing the decomposition, we formally define the bull graph. The \emph{bull graph} is a graph on five vertices consisting of a central triangle and two pendant edges attached to two distinct vertices of the triangle. We refer to these two disjoint pendant edges as the \emph{horns} of the bull graph; see~\cref{fig:bull}.

\begin{figure}[t]
    \centering
    \begin{tikzpicture}
        \node[vertex] (v4) at (-0.5, 1) {};
        \node[vertex] (v5) at (2.5, 1) {};
        \node[vertex] (v1) at (0, 0) {};
        \node[vertex] (v2) at (2, 0) {};
        \node[vertex] (v3) at (1, -1.5) {};

        \draw[standard edge] (v4) -- (v1); %
        \draw[standard edge] (v5) -- (v2); %
        \draw[standard edge] (v1) -- (v2) -- (v3) -- (v1); %
    \end{tikzpicture}
    \caption{The bull graph, consisting of a triangle and two disjoint pendant edges, called \emph{horns}, attached to distinct vertices.}
    \label{fig:bull}
\end{figure}
\begin{lemma}[Bull Decomposition of $K_5$]\label[lemma]{lem:bull}
Both color classes in an admissible coloring of a $K_5$ must form a connected spanning bull subgraph (see \cref{fig:bull}).
Moreover, every Red horn shares exactly one vertex with every Blue horn.
\end{lemma}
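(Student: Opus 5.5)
Both color classes are triangular forests on the same five vertices, so I will first use the edge count to pin down how many edges each class has. A triangular forest on $n$ vertices with $c$ components and $t$ triangles has exactly $n-c+t$ edges: it is a cactus in which each triangle contributes one edge beyond a spanning forest. On five vertices its triangles are edge-disjoint, and two triangles can share at most one vertex, so $t\le 2$. Moreover, $t=2$ forces the two triangles to share a vertex, which is the bowtie on all five vertices. Hence each class has at most $5-1+2=6$ edges, with equality only for the bowtie. Since $K_5$ has $10$ edges, the two classes have edge counts $(6,4)$, $(5,5)$ or $(4,6)$.

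Next I rule out the bowtie. Suppose Red is a bowtie with center $c$ and triangles $c a_1 a_2$ and $c b_1 b_2$. Then Blue consists of the four edges $a_ib_j$, which form the $4$-cycle $a_1b_1a_2b_2$, a contradiction. So each class has exactly $5$ edges. Plugging into $5=5-c+t$ gives $c=t$. A triangular forest with $t=c=2$ would have two components each containing a triangle, which needs at least six vertices; this is impossible. The case $t=c=0$ would be a $5$-edge forest on $5$ vertices, also impossible. Hence each class is connected with exactly one triangle plus two further edges, i.e.\ a spanning connected unicyclic graph whose only cycle is a triangle.

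Such a graph on five vertices is one of three types: the bull (pendants at two distinct triangle vertices), the triangle with a pendant path of length $2$, or the triangle with two pendants at the same vertex (the "cricket"). I will check each by computing the complement in $K_5$.
\begin{itemize}
\item Cricket: if the triangle is $xyz$ with pendants $u,v$ at $x$, the complement contains the cycle $u y v z$ of length $4$, so this type is excluded.
\item Path type: if the triangle is $xyz$ with path $x u v$, the complement is $\{yu, zu, yv, zv, xv, uz\}$ with $uz$ listed twice, i.e.\ the edges $yu,zu,yv,zv,xv$. These contain the $4$-cycle $y u z v$, so this type is excluded too.
\end{itemize}
Only the bull remains, and this proves the first claim.

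For the horn statement, let Red be the bull with triangle $xyz$ and horns $xu$ and $yv$. The Blue edges are then the complement $\{uv, uy, uz, vx, vz, xy\}$ minus those already red, where $xy$ is red; this leaves $\{uv, uy, uz, vx, vz\}$. The Blue triangle is $uvz$, and its horns are $uy$ and $vx$. I then check each pair of a Red horn and a Blue horn:
\begin{itemize}
\item $xu$ and $uy$ share exactly $u$;
\item $xu$ and $vx$ share exactly $x$;
\item $yv$ and $uy$ share exactly $y$;
\item $yv$ and $vx$ share exactly $v$.
\end{itemize}
Every Red horn meets every Blue horn in exactly one vertex, as claimed.

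The main obstacle is the case analysis in the middle: enumerating the connected unicyclic graphs on five vertices and verifying in each case that the complement contains a $C_4$. It is routine but must be complete. The counting identity $|E|=n-c+t$ and the bound $t\le 2$ keep the number of cases small.
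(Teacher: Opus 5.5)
Your proof is correct, and its skeleton matches the paper's. Both arguments run as follows: show each color class has exactly five edges by excluding the bowtie, conclude that each class is a spanning connected unicyclic graph whose only cycle is a triangle, eliminate the two non-bull types via a $C_4$ in the complement, and check the horns directly.

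The genuine difference is in the first step.
\begin{itemize}
\item The paper imports the Chartrand--Schuster fact from the proof of $R(C_4,C_4)=6$: every $5$-vertex graph with at least $6$ edges contains a $C_{\ge 4}$ unless it is the bowtie. It then argues connectivity separately: a disconnected $5$-edge graph would have a $4$-vertex component with $5$ edges.
\item You derive both facts from the cyclomatic identity $|E|=n-c+t$ for triangular forests together with $t\le 2$. This gives the bound of $6$ edges (with equality only for the bowtie) and, via $c=t$, connectedness in one self-contained computation.
\end{itemize}
Your version needs no external result. It also exhibits explicitly the $C_4$'s in the complements of the cricket and the path-type graph, which the paper only asserts. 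The paper's version is shorter because it leans on a known classification.

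Some small points to tidy up:
\begin{itemize}
\item The step $t\le 2$ deserves its one-line count: three triangles pairwise sharing at most one vertex need at least $9-3=6$ vertices.
\item You use $c$ both for the number of components and for the bowtie's center.
\item Your displayed ``complement'' sets in the path-type and bull cases contain a duplicate or a red edge before you correct them. The final edge sets are right.
\end{itemize}
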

The proof of \cref{lem:bull} closely follows a classical result in Ramsey theory, established by Chartrand and Schuster~\cite{chartrand1971existence}.

\begin{proof}[Proof of \cref{lem:bull}]
In their proof of $R(C_4,C_4)=6$, Chartrand and Schuster~\cite{chartrand1971existence} showed that every graph on $5$ vertices with $6$ or more edges contains a cycle of length at least $4$, unless it is the bowtie graph. 
In the exceptional case, its complement on $5$ vertices contains a $C_4$.
Thus, an admissible coloring must partition the $10$ edges of $K_5$ into two sets of exactly $5$ edges.

To classify the possible color classes, consider a graph on $5$ vertices with $5$ edges and no $C_{\ge 4}$.
It must be connected: a disconnected graph with $5$ edges would have a component on $4$ vertices with $5$ edges, which contains a $C_4$.
It is therefore unicyclic, and its unique cycle must be a triangle.
Distributing the two remaining edges among the two vertices outside this triangle gives, up to isomorphism, exactly three possibilities: a triangle with a path of length $2$ attached, a triangle with two pendant edges at the same vertex, and the bull graph. 
Out of these, only the bull graph has a $C_{\ge 4}$-free complement. Therefore, both color classes form spanning, connected bull graphs.

Let the Red bull have a central triangle $v_1v_2v_3$ and horns $v_1v_4$ and $v_2v_5$. Its complement, the Blue bull, necessarily has horns $v_1v_5$ and $v_2v_4$. The Red horn $v_1v_4$ shares exactly $v_1$ with $v_1v_5$ and exactly $v_4$ with $v_2v_4$.
By symmetry, every Red horn shares exactly one vertex with every Blue horn.
\end{proof}

Using this structure, we can construct a robust color-propagating gadget.

\begin{figure}[t]
    \centering
    \begin{tikzpicture}

        \node[vertex, label=90:$v_0$] (v0) at (90:2) {};
        \node[vertex, label=162:$v_1$] (v1) at (162:2) {};
        \node[vertex, label=234:$v_2$] (v2) at (234:2) {};
        \node[vertex, label=306:$v_3$] (v3) at (306:2) {};
        \node[vertex, label=18:$v_4$] (v4) at (18:2) {};

        \draw[red edge] (v0) -- (v1) -- (v4) -- (v0);
        \draw[red edge] (v1) -- (v2);
        \draw[red edge] (v4) -- (v3);

        \draw[blue edge] (v2) -- (v3) -- (v0) -- (v2);
        \draw[blue edge] (v2) -- (v4);
        \draw[blue edge] (v3) -- (v1);

    \end{tikzpicture}
    \caption{Edge-decomposition of a $K_5$ clique into a Red bull graph and a Blue bull graph, illustrating \cref{lem:bull}.}
    \label{fig:k5_bull_decomposition}
\end{figure}

\subsection{Construction}

We describe the construction of the gadgets starting with an instance of \cref{prob:MNAE4SAT}.

\begin{figure}[t]
    \centering
    \begin{tikzpicture}[
        scale=0.81, %
        transform shape
    ]

        \begin{scope}[gray, thin]
            \coordinate (A1_0) at (0, 1.5);
            \coordinate (A1_1) at (2, 1);      %
            \coordinate (A1_2) at (1.5, -1);   %
            \coordinate (A1_3) at (-1.5, -1);
            \coordinate (A1_4) at (-2, 1);
            
            \draw (A1_0)--(A1_1) (A1_0)--(A1_2) (A1_0)--(A1_3) (A1_0)--(A1_4);
            \draw (A1_1)--(A1_2) (A1_1)--(A1_3) (A1_1)--(A1_4);
            \draw (A1_2)--(A1_3) (A1_2)--(A1_4);
            \draw (A1_3)--(A1_4);

            \coordinate (A2_0) at (4, 1.5);
            \coordinate (A2_1) at (5.5, 1);    %
            \coordinate (A2_2) at (6, -1);     %
            \coordinate (A2_3) at (2.5, -1);   %
            \coordinate (A2_4) at (2, 1);      %
            
            \draw (A2_0)--(A2_1) (A2_0)--(A2_2) (A2_0)--(A2_3) (A2_0)--(A2_4);
            \draw (A2_1)--(A2_2) (A2_1)--(A2_3) (A2_1)--(A2_4);
            \draw (A2_2)--(A2_3) (A2_2)--(A2_4);
            \draw (A2_3)--(A2_4);

            \coordinate (A3_0) at (8, 1.5);
            \coordinate (A3_1) at (10, 1);     %
            \coordinate (A3_2) at (9.5, -1);   %
            \coordinate (A3_3) at (6, -1);     %
            \coordinate (A3_4) at (6.5, 1);    %
            
            \draw (A3_0)--(A3_1) (A3_0)--(A3_2) (A3_0)--(A3_3) (A3_0)--(A3_4);
            \draw (A3_1)--(A3_2) (A3_1)--(A3_3) (A3_1)--(A3_4);
            \draw (A3_2)--(A3_3) (A3_2)--(A3_4);
            \draw (A3_3)--(A3_4);

            \coordinate (A4_0) at (12, 1.5);
            \coordinate (A4_1) at (13.5, 1);   %
            \coordinate (A4_2) at (14, -1);    %
            \coordinate (A4_3) at (10.5, -1);  %
            \coordinate (A4_4) at (10, 1);     %
            
            \draw (A4_0)--(A4_1) (A4_0)--(A4_2) (A4_0)--(A4_3) (A4_0)--(A4_4);
            \draw (A4_1)--(A4_2) (A4_1)--(A4_3) (A4_1)--(A4_4);
            \draw (A4_2)--(A4_3) (A4_2)--(A4_4);
            \draw (A4_3)--(A4_4);

            \coordinate (A5_0) at (16, 1.5);
            \coordinate (A5_1) at (18, 1);
            \coordinate (A5_2) at (17.5, -1);
            \coordinate (A5_3) at (14, -1);    %
            \coordinate (A5_4) at (14.5, 1);   %
            
            \draw (A5_0)--(A5_1) (A5_0)--(A5_2) (A5_0)--(A5_3) (A5_0)--(A5_4);
            \draw (A5_1)--(A5_2) (A5_1)--(A5_3) (A5_1)--(A5_4);
            \draw (A5_2)--(A5_3) (A5_2)--(A5_4);
            \draw (A5_3)--(A5_4);
        \end{scope}

        \draw[black edge] (A1_1) -- (A1_2) -- (A2_3) -- cycle;

        \draw[black edge] (A2_2) -- (A2_1) -- (A3_4) -- cycle;

        \draw[black edge] (A3_1) -- (A3_2) -- (A4_3) -- cycle;

        \draw[black edge] (A4_2) -- (A4_1) -- (A5_4) -- cycle;

        \node[vertex] at (A1_0) {}; 
        \node[vertex, label={[yshift=2pt]above:$v_1$}] at (A1_1) {};
        \node[vertex, label=below:$u_1$] at (A1_2) {}; 
        \node[vertex] at (A1_3) {}; 
        \node[vertex] at (A1_4) {};

        \node[vertex] at (A2_0) {}; 
        \node[vertex, label={[yshift=2pt]above:$u_2$}] at (A2_1) {};
        \node[vertex, label=below:$v_2$] at (A2_2) {}; 
        \node[vertex, label=below:$u'_1$] at (A2_3) {};

        \node[vertex] at (A3_0) {}; 
        \node[vertex, label={[yshift=2pt]above:$v_3$}] at (A3_1) {};
        \node[vertex, label=below:$u_3$] at (A3_2) {}; 
        \node[vertex, label={[yshift=2pt]above:$u'_2$}] at (A3_4) {};

        \node[vertex] at (A4_0) {}; 
        \node[vertex, label={[yshift=2pt]above:$u_4$}] at (A4_1) {};
        \node[vertex, label=below:$v_4$] at (A4_2) {}; 
        \node[vertex, label=below:$u'_3$] at (A4_3) {};

        \node[vertex] at (A5_0) {}; 
        \node[vertex] at (A5_1) {};
        \node[vertex] at (A5_2) {}; 
        \node[vertex, label={[yshift=2pt]above:$u'_4$}] at (A5_4) {};

        \node[fill=white, inner sep=1pt] at (0, 0.2) {\Large $A_1$};
        \node[fill=white, inner sep=1pt] at (4, 0.2) {\Large $A_2$};
        \node[fill=white, inner sep=1pt] at (8, 0.2) {\Large $A_3$};
        \node[fill=white, inner sep=1pt] at (12, 0.2) {\Large $A_4$};
        \node[fill=white, inner sep=1pt] at (16, 0.2) {\Large $A_5$};

        \node[fill=white, inner sep=1pt] at (2, -0.2) {$T_1$};
        \node[fill=white, inner sep=1pt] at (6, 0.2) {$T_2$};
        \node[fill=white, inner sep=1pt] at (10, -0.2) {$T_3$};
        \node[fill=white, inner sep=1pt] at (14, 0.2) {$T_4$};

    \end{tikzpicture}
    \caption{The core part of a variable gadget of length $1$. The gadget propagates its color through a sequence of $K_5$ cliques ($A_1, A_2, A_3, A_4, A_5$) chained together by inside triangles ($T_1, T_2, T_3, T_4$), which must all be monochromatic by \cref{lem:variable}.}
    \label{fig:var}
    \end{figure}

\medskip
\noindent\textbf{Variable Gadget.~~} 
Informally, a variable gadget consists of copies of $K_5$ arranged in a path and connected by inside triangles, as depicted in \cref{fig:var}.
Fix a variable $V$, let
\[
O_V\coloneqq\{(V,C):C\in\Clauses\text{ and }V\in C\}
\]
be the set of occurrences of $V$, and set $\ell\coloneqq |O_V|$; we call $\ell$ the \emph{length} of the variable gadget.
We may assume that $\ell\geq 1$, since variables that do not occur in any clause can be deleted.
We fix an arbitrary bijection $\rho_V\colon[\ell]\to O_V$.
While describing this gadget, we suppress the superscript $V$ on its vertices, cliques, and triangles; we restore it when objects from several variable gadgets occur together.

The core part of the variable gadget begins with $4\ell+1$ copies of $K_5$, denoted by $A_1,\ldots,A_{4\ell+1}$.
For every $i\in[4\ell]$, the cliques $A_i$ and $A_{i+1}$ share exactly one vertex, denoted by $v_i$.
These are the only intersections among the cliques: nonconsecutive cliques are vertex-disjoint.
For every $i\in[4\ell]$, choose vertices
\[
u_i\in V(A_i)\setminus\{v_i\}
\qquad\text{and}\qquad
u'_i\in V(A_{i+1})\setminus\{v_i\}
\]
so that, for every $2\leq i\leq 4\ell$, the four vertices $v_{i-1},u'_{i-1},v_i,u_i$ are all distinct in $A_i$.
Such choices are always possible.
Indeed, $v_{i-1}\ne v_i$, since otherwise the nonconsecutive cliques $A_{i-1}$ and $A_{i+1}$ would intersect, and the $K_5$ clique $A_i$ has three further vertices from which distinct vertices $u'_{i-1}$ and $u_i$ can be chosen.
The pairwise-distinct condition ensures that the two inside-triangle edges used in $A_i$ are disjoint horns when required below. %

For every $i\in[4\ell]$, we add the edge $u_iu'_i$, which, together with the edges $u_iv_i\in E(A_i)$ and $v_iu'_i\in E(A_{i+1})$, forms the \emph{inside triangle} $T_i\coloneqq u_iv_iu'_i$.
For every $2\leq i\leq 4\ell$, define
\[
e_i\coloneqq u'_{i-1}v_i\in E(A_i).
\]
For each $i\in[\ell]$, the cliques $A_{4i-3},\ldots,A_{4i}$ form the $i$-th \emph{segment}, denoted by $S_i$, and this segment represents the occurrence $\rho_V(i)$.
We call the construction described so far the \emph{core part} of the variable gadget.

For every $i\in[\ell]$, introduce two new vertices $w_{2i-1}$ and $w_{2i}$.
We form the \emph{attachment triangle} $Q_{2i-1}$ by joining $w_{2i-1}$ to both endpoints of $e_{4i-2}$, and form $Q_{2i}$ analogously using $w_{2i}$ and $e_{4i}$.
We call $w_{2i-1}$ and $w_{2i}$ the two \emph{tips} of $S_i$.
Finally, we introduce a new vertex $z_i$ and add the three edges $w_{2i-1}w_{2i}$, $w_{2i-1}z_i$, $w_{2i}z_i$.
They form the \emph{central triangle} $P_i\coloneqq w_{2i-1}w_{2i}z_i$.
See~\cref{fig:var_with_clause}.
The edge $w_{2i-1}z_i$ is the \emph{clause edge} associated with the occurrence $\rho_V(i)$.

\begin{figure}[t]
    \begin{center}
    \hspace*{-0.2cm}
    \begin{tikzpicture}[
        scale=0.75
    ]
        \begin{scope}[gray, thin]
            \coordinate (A0) at (0, 1.5);
            \coordinate (A1) at (2, 1);
            \coordinate (A2) at (1.5, -1);
            \coordinate (A3) at (-1.5, -1);
            \coordinate (A4) at (-2, 1);
            
            \draw (A0)--(A1) (A0)--(A2) (A0)--(A3) (A0)--(A4);
            \draw (A1)--(A2) (A1)--(A3) (A1)--(A4);
            \draw (A2)--(A3) (A2)--(A4);
            \draw (A3)--(A4);

            \coordinate (B0) at (4, 1.5);
            \coordinate (B1) at (5.5, 1);
            \coordinate (B2) at (6, -1);
            \coordinate (B3) at (2.5, -1);

            \draw (B0)--(B1) (B0)--(B2) (B0)--(B3) (B0)--(A1);
            \draw (B1)--(B2) (B1)--(B3) (B1)--(A1);
            \draw (B2)--(B3) (B2)--(A1);
            \draw (B3)--(A1);

            \coordinate (C0) at (8, 1.5);
            \coordinate (C1) at (9.5, 1);
            \coordinate (C2) at (10, -1);
            \coordinate (C4) at (6.5, 1);
            
            \draw (C0)--(C1) (C0)--(C2) (C0)--(B2) (C0)--(C4);
            \draw (C1)--(C2) (C1)--(B2) (C1)--(C4);
            \draw (C2)--(B2) (C2)--(C4);
            \draw (B2)--(C4);

            \coordinate (D0) at (12, 1.5);
            \coordinate (D1) at (13.5, 1);
            \coordinate (D2) at (14, -1);
            \coordinate (D3) at (11, -1);

            \draw (D0)--(D1) (D0)--(D2) (D0)--(D3) (D0)--(C1);
            \draw (D1)--(D2) (D1)--(D3) (D1)--(C1);
            \draw (D2)--(D3) (D2)--(C1);
            \draw (D3)--(C1);
        \end{scope}

        \draw[black edge] (A1) -- (A2) -- (B3) -- cycle;
        \draw[black edge] (B2) -- (B1) -- (C4) -- cycle;
        \draw[black edge] (C1) -- (C2) -- (D3) -- cycle;

        \coordinate (Y1) at (4.25, -4);
        \coordinate (Y2) at (12.5, -4);

        \draw[blue edge] (B2) -- node[above, text=black, font=\small, yshift=2pt] {$e_{4i-2}$} (B3);
        \draw[blue edge] (B2) -- (Y1) -- (B3);

        \draw[blue edge] (D2) -- node[above, text=black, font=\small, yshift=2pt] {$e_{4i}$} (D3);
        \draw[blue edge] (D2) -- (Y2) -- (D3);

        \coordinate (Z) at (8.375, -7.5);
        \draw[red edge] (Y1) -- (Y2); %
        \draw[red edge] (Y1) -- (Z) -- (Y2); %

        \coordinate (S1) at (0.75, -8.125);   
        \coordinate (S2) at (4.875, -11.625); 
        \draw[black edge] (Y1) -- (S1) -- (S2) -- (Z);

        \foreach \p in {A0, A1, A2, A3, A4, B0, B1, B2, B3, C0, C1, C2, C4, D0, D1, D2, D3, S1, S2} {
            \node[vertex] at (\p) {};
        }
        \node[vertex, label={[xshift=-8pt, yshift=-12pt]left:$w_{{2i-1}}$}] at (Y1) {};
        \node[vertex, label={[xshift=8pt, yshift=-12pt]right:$w_{{2i}}$}] at (Y2) {};
        \node[vertex, label=right:$z_i$] at (Z) {};

        \node[fill=white, inner sep=1pt] at (0, 0.2) {\Large $A_{4i-3}$};
        \node[fill=white, inner sep=1pt] at (4, 0.2) {\Large $A_{4i-2}$};
        \node[fill=white, inner sep=1pt] at (8, 0.2) {\Large $A_{4i-1}$};
        \node[fill=white, inner sep=1pt] at (12, 0.2) {\Large $A_{4i}$};
        
        \node at (-3.5, 0.2) {\Huge $\cdots$};
        \node at (15.5, 0.2) {\Huge $\cdots$};

    \end{tikzpicture}        
    \end{center}
     \caption{Extended variable gadget structure. %
     The four displayed cliques form segment $S_i$.
     The attachment triangles $Q_{2i-1}$ and $Q_{2i}$ are based on the edges $e_{4i-2}\in E(A_{4i-2})$ and $e_{4i}\in E(A_{4i})$, respectively.
     Their tips $w_{2i-1}$ and $w_{2i}$, together with $z_i$, form the central triangle $P_i$.
     The colors illustrate the prescribed coloring of a Red variable.
     The three black edges, together with the clause edge $w_{2i-1}z_i$, indicate the corresponding clause gadget.}
    \label{fig:var_with_clause}
\end{figure}

\medskip
\noindent\textbf{Clause Gadget.~~}
Let $C\in\Clauses$ be a clause, and fix an arbitrary cyclic ordering $(V_1,V_2,V_3,V_4)$ of its four variables.
For each $r\in[4]$, let $i_r$ be the unique index such that $\rho_{V_r}(i_r)=(V_r,C)$.
To simplify notation within this clause, write
\[
a_r\coloneqq w^{V_r}_{2i_r-1},
\qquad
b_r\coloneqq w^{V_r}_{2i_r},
\qquad
z_r\coloneqq z^{V_r}_{i_r}.
\]
Thus, $a_rz_r$ is the clause edge associated with the occurrence of $V_r$ in $C$, and $P_{i_r}^{V_r}=a_rb_rz_r$ is its central triangle.
We identify
\[
z_r=a_{r+1}
\qquad\text{for every }r\in[4],
\]
where indices are taken modulo $4$, so that $a_5=a_1$.
Consequently, the four clause edges become $a_1a_2$, $a_2a_3$, $a_3a_4$, $a_4a_1$ and form the clause gadget, which is a copy of $C_4$.
No other identifications are made between distinct variable gadgets.
Every occurrence is assigned to exactly one segment, so each vertex denoted by $a_r$ or $z_r$ participates in exactly one identification, and no vertex denoted by $b_r$ is identified with another vertex.
Moreover, the two vertices in each identified pair initially belong to different variable gadgets, because every clause contains four distinct variables.
Thus no identification creates a loop.
After the identifications, the central triangle associated with $V_r$ is $P_{i_r}^{V_r}=a_rb_ra_{r+1}$, where $b_r$ is not used elsewhere in the clause gadget, and the only edges with both endpoints among $a_1,a_2,a_3,a_4$ are the four distinct clause edges.
All remaining edges retain an endpoint private to their original variable gadget, so no two edges become parallel.
Consequently, the construction produces a simple graph, and every clause edge belongs to exactly one clause gadget.
See \cref{fig:var_with_clause} for the connection between a variable segment and its clause gadget.

\medskip
\noindent\textbf{Constructed Graph.~~}
For every variable $V$, we construct the variable gadget above with $\ell=|O_V|$, initially taking the gadgets for distinct variables to be vertex-disjoint.
We then perform the identifications prescribed by the clause gadgets.
A variable gadget has $O(\ell+1)$ vertices and edges, and the clause identifications introduce no new vertices or edges and only reduce the number of vertices.
Since $\sum_{V\in\Vars}|O_V|=4|\Clauses|$, the resulting simple graph has $O(|\Vars|+|\Clauses|)$ vertices and edges and can be constructed in polynomial time.

\subsection{Properties of Gadgets}

In this section, we describe some basic properties of the constructed gadgets.

\begin{lemma}[Monochromatic Inside Triangle]\label[lemma]{lem:triangle}
Let $G$ be the constructed gadget graph.
Each inside triangle must be monochromatic in any admissible coloring of $G$.
\end{lemma}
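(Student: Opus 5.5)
Fix an inside triangle $T_i=u_iv_iu'_i$ and an admissible coloring. The plan is to argue by contradiction: assume $T_i$ is not monochromatic and use the rigid structure that \cref{lem:bull} forces on the cliques $A_i$ and $A_{i+1}$ to produce a monochromatic cycle of length at least $4$.

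\textbf{The two clique edges of $T_i$ share a color.} Consider the edges $u_iv_i\in E(A_i)$ and $v_iu'_i\in E(A_{i+1})$. Suppose, say, $u_iv_i$ is Red and $v_iu'_i$ is Blue.

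By \cref{lem:bull}, the Red edges of $A_{i+1}$ form a spanning connected bull. In particular, there is a Red path $P$ inside $A_{i+1}$ from $v_i$ to $u'_i$, and it has length at least $2$ because $v_iu'_i$ is Blue. We also have the path $v_i u_i$ (Red) followed by the edge $u_iu'_i$.
\begin{itemize}
\item If $u_iu'_i$ is Red, then $v_iu_iu'_i$ together with $P$ closes a Red cycle. The path $P$ avoids $u_i$, since $u_i\notin V(A_{i+1})$. Hence this cycle has length at least $2+2=4$, a contradiction.
\item If $u_iu'_i$ is Blue, apply the symmetric argument with the roles of $A_i$ and $A_{i+1}$ swapped. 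The Blue bull of $A_i$ gives a Blue $v_i$--$u_i$ path $P'$ in $A_i$ of length at least $2$, avoiding $u'_i$. Closing it with the Blue path $u_iu'_iv_i$ yields a Blue cycle of length at least $4$, again a contradiction.
\end{itemize}
So $u_iv_i$ and $v_iu'_i$ must have the same color, say Red.

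\textbf{The third edge matches.} If $u_iu'_i$ were Blue, we would repeat the argument with two different bulls:
\begin{itemize}
\item the Blue bull of $A_i$ gives a Blue $u_i$--$v_i$ path of length at least $2$;
\item the Blue bull of $A_{i+1}$ gives a Blue $v_i$--$u'_i$ path of length at least $2$.
\end{itemize}
These two paths meet only at $v_i$, because $V(A_i)\cap V(A_{i+1})=\{v_i\}$. Concatenating them and closing with $u_iu'_i$ gives a Blue cycle of length at least $5$, a contradiction. Hence $T_i$ is monochromatic.

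\textbf{Main point to check.} The step needing care is confirming that these closing paths are vertex-disjoint apart from their endpoints. This follows from the fact that the two cliques share only $v_i$ and that $u_i\notin A_{i+1}$, $u'_i\notin A_i$. One must also confirm that the clause identifications do not merge vertices of $A_i$ and $A_{i+1}$. They do not: identifications only involve tips $w$ and vertices $z$, which lie outside the cliques.

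The argument uses nothing beyond the connectivity and spanning property of each color class in \cref{lem:bull}.
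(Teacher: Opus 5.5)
Your proof is correct and matches the paper's argument essentially step for step. You use the same two-case analysis (the Red bull of $A_{i+1}$ or the Blue bull of $A_i$) to force the two clique edges of $T_i$ to agree, then concatenate the two opposite-colored bull paths through $v_i$ with the cross edge to get a cycle of length at least $5$; your extra check that clause identifications never merge vertices of $A_i$ and $A_{i+1}$ is correct and is a detail the paper leaves implicit.
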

\begin{proof}
Let $A$ and $B$ be two copies of $K_5$ sharing exactly one vertex $v$. Let $T=(u,v,w)$ be a triangle such that $uv \in E(A)$, $vw \in E(B)$.
Observe that $T$ is an inside triangle in a variable gadget.

By \cref{lem:bull}, the Red and Blue subgraphs within both $A$ and $B$ are connected, spanning bull graphs. We will show that any non-monochromatic coloring of $T$ creates a cycle of length $\ge 4$.

Assume for contradiction that the edges $uv$ and $vw$ receive different colors. Without loss of generality, let $uv$ be Red and $vw$ be Blue. We consider the color of the edge $uw$:
\begin{figure}[t]
    \centering
    \begin{tikzpicture}

        \coordinate (A0) at (0, 1.5);
        \coordinate (A1) at (2, 1);      %
        \coordinate (A2) at (1.5, -1);   %
        \coordinate (A3) at (-1.5, -1);  %
        \coordinate (A4) at (-2, 1);     %

        \coordinate (B0) at (4, 1.5);
        \coordinate (B1) at (5.5, 1);    
        \coordinate (B2) at (6, -1);     
        \coordinate (B3) at (2.5, -1);   
        \coordinate (B4) at (2, 1);      %

        \begin{scope}[gray, thin]
            \draw (A0)--(A1) (A0)--(A2) (A0)--(A3) (A0)--(A4);
            \draw (A1)--(A2) (A1)--(A3) (A1)--(A4);
            \draw (A2)--(A3) (A2)--(A4);
            \draw (A3)--(A4);
        \end{scope}

        \draw[red edge] (B1) -- (B0);
        \draw[red edge] (B1) -- (B2);
        \draw[red edge] (B0) -- (B2);
        \draw[red edge] (B0) -- (B3);
        \draw[red edge] (B4) -- (B2);

        \draw[blue edge] (B0) -- (B4);
        \draw[blue edge] (B1) -- (B4);
        \draw[blue edge] (B1) -- (B3);
        \draw[blue edge] (B2) -- (B3);

        \draw[red edge] (A2) -- (A1);   %
        \draw[blue edge] (A1) -- (B3); %
        \draw[red edge] (A2) -- (B3);  %

        \node[vertex] at (A0) {}; 
        \node[vertex, label=above:$v$] at (A1) {};
        \node[vertex, label=below:$u$] at (A2) {}; 
        \node[vertex] at (A3) {}; 
        \node[vertex] at (A4) {};
        
        \node[vertex] at (B0) {}; 
        \node[vertex] at (B1) {};
        \node[vertex] at (B2) {}; 
        \node[vertex, label=below:$w$] at (B3) {};

        \node[fill=white, inner sep=1pt] at (0,0.2) {$A$};
        \node[fill=white, inner sep=1pt] at (4,0.2) {$B$};
        
    \end{tikzpicture}
    \caption{Illustration for the Monochromatic Inside Triangle Lemma (\cref{lem:triangle}). The highlighted Red $C_5$ demonstrates the contradiction arising when one edge of the inside triangle is Blue and the other two are Red.}
    \label{fig:lemma_triangle_case1}
\end{figure}

\begin{itemize}
    \item \textbf{Case 1: $uw$ is Red.} 
Since the Red bull in $B$ is connected and spanning, it contains a simple Red path from $v$ to $w$.
    Because the edge $vw$ is Blue, this path has length at least $2$.
    Its union with the Red edges $uv$ and $uw$ is a simple Red cycle of length at least $4$ ($u-v-\dots-w-u$), a contradiction; this situation is illustrated in \cref{fig:lemma_triangle_case1}. 
    \item \textbf{Case 2: $uw$ is Blue.} By symmetry, the Blue bull in $A$ contains a simple Blue path from $u$ to $v$ of length at least $2$, because the direct edge $uv$ is Red.
Its union with the Blue edges $uw$ and $vw$ is a simple Blue cycle of length at least $4$, a contradiction.
\end{itemize}
Because both cases result in a forbidden $C_{\ge 4}$, the edges $uv$ and $vw$ must have the same color.
By exchanging Red and Blue if necessary, assume that they are both Red.
Suppose for contradiction that $uw$ is Blue.
Since the Blue subgraph of $A$ is connected and spanning, it contains a simple Blue path $R_A$ from $u$ to $v$.
This path has length at least $2$, because $uv$ is Red.
Similarly, the Blue subgraph of $B$ contains a simple Blue path $R_B$ from $v$ to $w$ of length at least $2$.
Since $A\cap B=\{v\}$, the paths $R_A$ and $R_B$ meet exactly at $v$.
The union of $R_A$, $R_B$, and the Blue edge $uw$ is a Blue cycle of length at least $5$, a contradiction.
Consequently, $uw$ is Red, and $T$ is monochromatic.
\end{proof}

\begin{lemma}[Variable Coloring]\label[lemma]{lem:variable}
Let $V$ be a variable gadget of length $\ell$.
In every admissible coloring, all inside triangles of $V$ have a common color.
Moreover, for every $i\in[4\ell]$, the edges $v_iu_i\in E(A_i)$ and $v_iu'_i\in E(A_{i+1})$ are horns of that color in the respective bull decompositions.
\end{lemma}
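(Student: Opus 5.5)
We prove the two claims together, working locally in each clique $A_i$ and using \cref{lem:bull} and \cref{lem:triangle}.

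\textbf{Step 1: the triangle edges are horns.} Fix $i\in[4\ell]$. By \cref{lem:triangle}, $T_i$ is monochromatic, say Red.

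First we show that $v_iu_i$ is a horn of the Red bull in $A_i$ rather than a central-triangle edge. Suppose $v_iu_i$ lay on the Red central triangle $v_iu_ix$ of $A_i$. Then $u_ixv_iu'_iu_i$ would be a Red closed walk using the edges $u_ix$, $xv_i$, $v_iu'_i$, $u'_iu_i$. Its vertices are distinct, since $u'_i\notin V(A_i)$. Hence it is a Red $C_4$, a contradiction. So $v_iu_i$ is a Red horn of $A_i$.

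The symmetric argument inside $A_{i+1}$ uses the central triangle through $v_iu'_i$ together with the edge $u'_iu_i$ and $u_i\notin V(A_{i+1})$. It shows that $v_iu'_i$ is a Red horn of $A_{i+1}$.

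So in each inside triangle $T_i$, the two clique edges are horns of the color of $T_i$.

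\textbf{Step 2: consecutive triangles share a color.} For $2\le i\le 4\ell$, the clique $A_i$ contains the edge $v_{i-1}u'_{i-1}$ from $T_{i-1}$ and the edge $v_iu_i$ from $T_i$. By the distinctness condition in the construction, these two edges are vertex-disjoint.

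By Step 1, each is a horn of $A_i$ in the color of its triangle. Suppose the colors differed, one Red and one Blue. Then we would have a Red horn and a Blue horn of $A_i$ sharing no vertex. This contradicts the last part of \cref{lem:bull}, which says every Red horn meets every Blue horn in exactly one vertex. Hence $T_{i-1}$ and $T_i$ have the same color.

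Induction along $i$ then gives a common color for all inside triangles. Combined with Step 1, this yields the "moreover" claim.

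\textbf{Expected obstacle.} The only delicate point is Step 1. We must check that the $4$-cycle really is simple, which uses $u'_i\notin A_i$ and $u_i\notin A_{i+1}$, both guaranteed because $A_i\cap A_{i+1}=\{v_i\}$. We must also note that in a bull every edge is either a horn or a central-triangle edge. Both checks are routine.
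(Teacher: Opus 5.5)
Your proof is correct and follows the paper's argument essentially step for step. You first rule out that a clique edge of $T_i$ lies in a central triangle, using the monochromatic $4$-cycle through $x, v_i, u'_i, u_i$. You then use the vertex-disjointness of the two inside-triangle edges in an internal clique, together with the horn-intersection property of \cref{lem:bull}, to force consecutive inside triangles to share a color; the paper does the same, indexing the pair as $T_j,T_{j+1}$ in $A_{j+1}$ rather than $T_{i-1},T_i$ in $A_i$.
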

We denote the common color of the inside triangles of $V$ by $c(V)$ and the other color by $\neg{c(V)}$.
The color propagation described by \cref{lem:variable} is illustrated in \cref{fig:valid_variable_coloring}.

\begin{figure}[b]
    \centering
    \begin{tikzpicture}
        \coordinate (V0_1) at (4, 1.5);
        \coordinate (V1_1) at (5.5, 1);
        \coordinate (V2_1) at (6, -1);
        \coordinate (V3_1) at (2.5, -1);
        \coordinate (V4_1) at (2, 1);

        \coordinate (V0_2) at (8, 1.5);
        \coordinate (V1_2) at (10, 1);
        \coordinate (V2_2) at (9.5, -1);
        \coordinate (V3_2) at (6, -1);
        \coordinate (V4_2) at (6.5, 1);

        \coordinate (V0_3) at (12, 1.5);
        \coordinate (V1_3) at (13.5, 1);
        \coordinate (V2_3) at (14, -1);
        \coordinate (V3_3) at (10.5, -1);
        \coordinate (V4_3) at (10, 1);

        \foreach \i in {1,2,3} {
            \draw[red edge] (V0_\i) -- (V4_\i) -- (V2_\i) -- cycle;
            \draw[red edge] (V4_\i) -- (V3_\i);
            \draw[red edge] (V2_\i) -- (V1_\i);

            \draw[blue edge] (V0_\i) -- (V1_\i) -- (V3_\i) -- cycle;
            \draw[blue edge] (V1_\i) -- (V4_\i); 
            \draw[blue edge] (V3_\i) -- (V2_\i); 
        }

        \draw[red edge] (V1_1) -- (V4_2);
        \draw[red edge] (V2_2) -- (V3_3);

        \foreach \i in {1,2,3} {
            \node[vertex] at (V0_\i) {};
            \node[vertex] at (V1_\i) {};
            \node[vertex] at (V2_\i) {};
            \node[vertex] at (V3_\i) {};
            \node[vertex] at (V4_\i) {};
        }

    \end{tikzpicture}
    \caption{An admissible coloring of part of the core of a Red variable gadget, in which each displayed bottom edge is a Blue horn in its clique's bull decomposition.}
    \label{fig:valid_variable_coloring}
\end{figure}

\begin{proof}[Proof of \cref{lem:variable}]
By \cref{lem:triangle}, every inside triangle is monochromatic.
We first prove the horn assertion.
Fix $i\in[4\ell]$.
By exchanging Red and Blue if necessary, assume that $T_i$ is Red.
Suppose that $v_iu_i$ belongs to the central triangle of the Red bull in $A_i$.
Then some vertex $x\in V(A_i)$ forms a Red triangle with $v_i$ and $u_i$.
The vertices $x,v_i,u_i$ are distinct, while $u'_i\in V(A_{i+1})\setminus\{v_i\}$ and $A_i\cap A_{i+1}=\{v_i\}$, so $x,v_i,u'_i,u_i$ are pairwise distinct.
Together with the Red inside triangle $T_i$, this gives the Red cycle $xv_iu'_iu_ix$ of length $4$, a contradiction.
Thus, $v_iu_i$ is a Red horn in $A_i$.
The same argument applied in $A_{i+1}$ shows that $v_iu'_i$ is a Red horn there.
Since $i$ was arbitrary and the argument is symmetric in the two colors, the horn assertion follows.

It remains to show that all inside triangles have the same color.
Otherwise, for some $j\in[4\ell-1]$, the consecutive inside triangles $T_j$ and $T_{j+1}$ have different colors.
By exchanging Red and Blue if necessary, assume that $T_j$ is Red and $T_{j+1}$ is Blue.

By the horn assertion proved above, $v_ju'_j$ is a Red horn in $A_{j+1}$, whereas $v_{j+1}u_{j+1}$ is a Blue horn in $A_{j+1}$.
By construction, the four vertices $v_j,u'_j,v_{j+1},u_{j+1}$ are pairwise distinct, so these two horn edges are vertex-disjoint.
This contradicts \cref{lem:bull}, according to which every Red horn has exactly one common endpoint with every Blue horn.
Therefore, all inside triangles have the same color.
\end{proof}

\begin{lemma}[Attachment $\neg{c(V)}$-colored Path]\label[lemma]{lem:ClauseTriangles}
Let $G$ be the graph produced by the entire construction.
For a segment $S_i$ of a variable gadget $V$, let $H_i$ be the subgraph consisting of the cliques $A_{4i-3},\ldots,A_{4i}$, the inside triangles joining consecutive cliques among them, and the attachment triangles $Q_{2i-1}$ and $Q_{2i}$, but no edge of the central triangle $P_i$.
In every admissible coloring of $G$, the graph $H_i$ contains a $\neg{c(V)}$-colored path of length at least $4$ between $w_{2i-1}$ and $w_{2i}$.
\end{lemma}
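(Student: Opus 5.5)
The plan is to show two things. First, each tip $w\in\{w_{2i-1},w_{2i}\}$ has at least one $\neg{c(V)}$-colored edge to an endpoint of its base edge. Second, the $\neg{c(V)}$-colored parts of the cliques $A_{4i-2},A_{4i-1},A_{4i}$ are connected, so a path can be routed between these two endpoints. Write $c=c(V)$ and assume by symmetry that $c$ is Red.

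\emph{Step 1: each tip has a Blue edge.} Fix $j\in\{4i-2,4i\}$, the edge $e_j=u'_{j-1}v_j\in E(A_j)$, and the corresponding tip $w$. Suppose both edges $wu'_{j-1}$ and $wv_j$ are Red.
\begin{itemize}
\item If $e_j$ is Blue: by \cref{lem:bull} the Red bull of $A_j$ is connected and spanning. Hence it contains a Red $u'_{j-1}$--$v_j$ path, of length at least $2$ because $e_j$ is not Red. Closing this path through $w$ gives a Red cycle of length at least $4$, a contradiction.
\item If $e_j$ is Red: by \cref{lem:variable}, the Red horns of $A_j$ are exactly $v_{j-1}u'_{j-1}$ and $v_ju_j$. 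The vertices $v_{j-1},u'_{j-1},v_j,u_j$ are pairwise distinct, so $e_j$ is neither of these horns. It therefore lies in the Red central triangle $u'_{j-1}v_jx$ of $A_j$. Then $u'_{j-1}\,x\,v_j\,w$ is a Red $C_4$, again a contradiction.
\end{itemize}
Hence $w_{2i-1}$ has a Blue edge to some $p\in\{u'_{4i-3},v_{4i-2}\}$, and $w_{2i}$ has a Blue edge to some $q\in\{u'_{4i-1},v_{4i}\}$.

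\emph{Step 2: build the path.} By \cref{lem:bull}, the Blue subgraph of each of $A_{4i-2},A_{4i-1},A_{4i}$ is connected and spanning. Consecutive cliques share $v_{4i-2}$ and $v_{4i-1}$, so the union of these three Blue subgraphs is connected. Choose a simple Blue $p$--$q$ path inside this union. We may take it to consist of three pieces: a Blue path from $p$ to $v_{4i-2}$ in $A_{4i-2}$, then one from $v_{4i-2}$ to $v_{4i-1}$ in $A_{4i-1}$, then one from $v_{4i-1}$ to $q$ in $A_{4i}$. These pieces meet only at the cut vertices, so their concatenation is simple.

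\emph{Step 3: count the length.} The middle piece has length at least $1$, since $v_{4i-2}\neq v_{4i-1}$. The last piece also has length at least $1$, since $q\neq v_{4i-1}$ by the distinctness of $v_{4i-1},u'_{4i-1},v_{4i},u_{4i}$ in $A_{4i}$. The tips are distinct vertices lying outside all cliques, so prepending $w_{2i-1}p$ and appending $qw_{2i}$ keeps the path simple. The total length is therefore at least $1+2+1=4$. All edges used belong to $H_i$, and none belongs to $P_i$.

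The step needing the most care is the case in Step 1 where $e_j$ itself is Red. It relies on the horn assertion of \cref{lem:variable} to force $e_j$ into the Red central triangle of $A_j$; once that is settled, the rest is connectivity bookkeeping.
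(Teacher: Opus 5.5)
Your proposal is correct and follows essentially the same route as the paper. You use the horn assertion of \cref{lem:variable} to show that a tip cannot have both attachment edges $c(V)$-colored: the competing path is either the spanning $c(V)$-colored bull or the rest of its central triangle. You then concatenate $\neg{c(V)}$-colored paths through $A_{4i-2},A_{4i-1},A_{4i}$ at the shared vertices, with the same length count based on $v_{4i-2}\neq v_{4i-1}$ and $q\neq v_{4i-1}$.
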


\begin{proof}
Fix a segment $S_i$ of a variable gadget $V$.
We first show that each of its tips has a $\neg{c(V)}$-colored edge to the core of the segment.

Let $w$ be one of the tips, and let $e_j=xy$ be the edge on which its attachment triangle is built; thus, $j\in\{4i-2,4i\}$.
The clique $A_j$ contains the two clique-edges $v_{j-1}u'_{j-1}$ and $v_ju_j$ of the adjacent inside triangles.
By \cref{lem:variable}, both are $c(V)$-colored horns.
They are disjoint, and a bull graph has exactly two horns, so these are precisely the two $c(V)$-colored horns in $A_j$.
Since $e_j=u'_{j-1}v_j$ is distinct from both of them, $e_j$ is not a $c(V)$-colored horn.

Suppose for contradiction that both $wx$ and $wy$ have color $c(V)$.
We claim that $A_j$ contains a $c(V)$-colored $xy$ path avoiding $xy$.
If $xy$ has color $\neg{c(V)}$, this follows from the fact that the $c(V)$-colored bull in $A_j$ is connected and spanning.
If $xy$ has color $c(V)$, then, since it is not a horn, it belongs to the central triangle of that bull, whose other two edges form such a path.
In either case, this path has length at least $2$.
Together with the path $xwy$, it forms a $c(V)$-colored cycle of length at least $4$, a contradiction.
Hence, at least one of $wx$ and $wy$ has color $\neg{c(V)}$.

Choose a $\neg{c(V)}$-colored edge from $w_{2i-1}$ to a vertex $p\in A_{4i-2}$ and one from $w_{2i}$ to a vertex $q\in A_{4i}$.
Set $a\coloneqq 4i-2$.
The $\neg{c(V)}$-colored bull in $A_a$ contains a path from $p$ to $v_a$, the one in $A_{a+1}$ contains a path from $v_a$ to $v_{a+1}$, and the one in $A_{a+2}$ contains a path from $v_{a+1}$ to $q$.
Because consecutive cliques intersect only in their specified shared vertex, these paths, together with the two chosen attachment edges, concatenate to a $\neg{c(V)}$-colored path in $H_i$ from $w_{2i-1}$ to $w_{2i}$.

This path has length at least $4$.
Indeed, its two attachment edges contribute $2$.
Moreover, $v_a\ne v_{a+1}$, so the path inside $A_{a+1}$ has positive length.
Finally, $q$ is an endpoint of $e_{a+2}=u'_{a+1}v_{a+2}$, and both endpoints of this edge are distinct from $v_{a+1}$.
Hence, the path inside $A_{a+2}$ also has positive length.
\end{proof}

\begin{lemma}[$c(V)$-colored Path in a Central Triangle]\label[lemma]{lem:centraltriag}
Let $V$ be a variable gadget of length $\ell$.
In every admissible coloring and for every $i\in[\ell]$, the edge $w_{2i-1}w_{2i}$ has color $c(V)$, and at least one of the edges $w_{2i-1}z_i$ and $w_{2i}z_i$ has color $c(V)$.
Consequently, the central triangle $P_i$ contains a $c(V)$-colored path of length at most $2$ from $w_{2i-1}$ to $z_i$.
\end{lemma}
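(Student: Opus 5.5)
By \cref{lem:ClauseTriangles}, the subgraph $H_i$ contains a $\neg{c(V)}$-colored path $R$ of length at least $4$ from $w_{2i-1}$ to $w_{2i}$, and this path uses no edge of $P_i$. I will combine $R$ with edges of $P_i$ to produce a forbidden monochromatic cycle in each bad case. One point needs checking first: after the clause identifications, $z_i$ (which becomes $a_{r+1}$ for some clause) must not be a vertex of $R$. The vertices of $R$ are $w_{2i-1}$, $w_{2i}$ and vertices of the core of $V$. The vertex $z_i$ is identified only with a tip $w^{V'}_{2j-1}$ of a different variable gadget $V'$. So $z_i$ lies outside the core of $V$, and it is distinct from both $w_{2i-1}$ and $w_{2i}$, since each of these is identified at most once and with a vertex of another gadget. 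Hence $z_i\notin V(R)$.

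\textbf{Step 1: the edge $w_{2i-1}w_{2i}$ has color $c(V)$.} Suppose instead that it has color $\neg{c(V)}$. Adding it to $R$ closes a $\neg{c(V)}$-colored cycle of length at least $5$. This contradicts admissibility.

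\textbf{Step 2: at least one of $w_{2i-1}z_i$ and $w_{2i}z_i$ has color $c(V)$.} Suppose both have color $\neg{c(V)}$. Then $R$ followed by $w_{2i}z_i$ and $z_iw_{2i-1}$ is a $\neg{c(V)}$-colored cycle of length at least $6$. It is a simple cycle because $z_i\notin V(R)$. This again contradicts admissibility.

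\textbf{Step 3: the path.} If $w_{2i-1}z_i$ has color $c(V)$, this single edge is the required path of length $1$. Otherwise, by Step 2, $w_{2i}z_i$ has color $c(V)$. Together with $w_{2i-1}w_{2i}$, which has color $c(V)$ by Step 1, it gives the path $w_{2i-1}w_{2i}z_i$ of length $2$.

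The only delicate point is the vertex-distinctness check in the first paragraph. It follows directly from the identification rules in the construction.
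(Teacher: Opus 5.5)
Your proof is correct and follows the paper's argument step for step. You take the $\neg{c(V)}$-colored path $R$ from \cref{lem:ClauseTriangles}, check that $z_i\notin V(R)$ using the fact that identifications pair vertices of distinct gadgets disjointly, and close $R$ into a monochromatic cycle of length at least $5$ or at least $6$ in the two bad cases, exactly as the paper does.
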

\begin{proof}
By \cref{lem:ClauseTriangles}, $H_i$ contains a $\neg{c(V)}$-colored path $R$ of length at least $4$ from $w_{2i-1}$ to $w_{2i}$.
By the definition of $H_i$, no edge of $H_i$ is incident with $z_i$.
Indeed, before the clause identifications, $z_i$ lies outside $H_i$.
The identification pairs are vertex-disjoint, and the tip paired with $z_i$ belongs to a different variable gadget because every clause contains four distinct variables.
Consequently, no edge of $H_i$ becomes incident with $z_i$.
In particular, $z_i\notin V(R)$, and the path $R$ uses no edge of $P_i$.
If $w_{2i-1}w_{2i}$ also had color $\neg{c(V)}$, then $R\cup\{w_{2i-1}w_{2i}\}$ would contain a $\neg{c(V)}$-colored cycle of length at least $5$, a contradiction.
Thus, $w_{2i-1}w_{2i}$ has color $c(V)$.

If both $w_{2i-1}z_i$ and $w_{2i}z_i$ had color $\neg{c(V)}$, then these two edges together with $R$ would form a $\neg{c(V)}$-colored cycle of length at least $6$, again a contradiction.
Hence, at least one of these two edges has color $c(V)$.
If $w_{2i-1}z_i$ has color $c(V)$, it is the required path; otherwise, $w_{2i-1}w_{2i}z_i$ is the required $c(V)$-colored path of length $2$.
\end{proof}

\subsection{Proof of \cref{thm:main}}

\noindent\textbf{Admissible Coloring to NAE Valuation.~~}
Let $\varphi=(\Vars,\Clauses)$ be an instance of \textsc{Monotone NAE-4-SAT}, and let $G_\varphi$ be the graph constructed above.
Suppose that $G_\varphi$ has an admissible coloring.
For each variable $V\in\Vars$, set
\[
\sigma(V)=\begin{cases}
\top,&\text{if }c(V)=\text{Red},\\
\bot,&\text{if }c(V)=\text{Blue}.
\end{cases}
\]
We claim that $\sigma$ is an NAE-satisfying valuation.

Suppose to the contrary that all four variables of some clause $C$ receive the same truth value under $\sigma$, and let $(V_1,V_2,V_3,V_4)$ be the cyclic ordering fixed for $C$ in the construction.
Then $c(V_1)=c(V_2)=c(V_3)=c(V_4)$.
By exchanging the names Red and Blue if necessary, assume that this common color is Red.
We use the notation $a_r,b_r,z_r$ from the construction of the clause gadget, so $z_r=a_{r+1}$ with indices modulo $4$.
By \cref{lem:centraltriag}, for every $r\in[4]$, the central triangle $P_{i_r}^{V_r}$ associated with the occurrence of $V_r$ contains a Red path $R_r$ from $a_r$ to $z_r=a_{r+1}$; this path is either the clause edge $a_rz_r$ or the two-edge path $a_rb_rz_r$.
Because the four variables of $C$ are distinct and no other identifications are made between their gadgets, the vertices $a_1,\ldots,a_4,b_1,\ldots,b_4$ are pairwise distinct.
Hence, with indices modulo $4$, consecutive paths $R_r$ and $R_{r+1}$ meet exactly at $a_{r+1}$, while nonconsecutive paths are vertex-disjoint.
Their union is therefore a Red subdivision of $C_4$, hence a monochromatic cycle of length at least $4$.
This contradicts admissibility.
Thus every clause contains variables of both truth values, and $\sigma$ NAE-satisfies $\varphi$.

\bigskip

\begin{figure}[t]
    \begin{center}
    \hspace*{-0.2cm}
    \begin{tikzpicture}[
        scale=0.75
    ]
        \coordinate (A0) at (0, 1.5);
        \coordinate (A1) at (2, 1);
        \coordinate (A2) at (1.5, -1);
        \coordinate (A3) at (-1.5, -1);
        \coordinate (A4) at (-2, 1);

        \coordinate (B0) at (4, 1.5);
        \coordinate (B1) at (5.5, 1);
        \coordinate (B2) at (6, -1);
        \coordinate (B3) at (2.5, -1);
        \coordinate (C0) at (8, 1.5);
        \coordinate (C1) at (9.5, 1);
        \coordinate (C2) at (10, -1);
        \coordinate (C4) at (6.5, 1);

        \coordinate (D0) at (12, 1.5);
        \coordinate (D1) at (13.5, 1);
        \coordinate (D2) at (14, -1);
        \coordinate (D3) at (11, -1);
        \coordinate (Y1) at (4.25, -4);
        \coordinate (Y2) at (12.5, -4);
        \coordinate (Z) at (8.375, -7.5);

        \draw[red edge] (A0) -- (A1) -- (A3) -- cycle;
        \draw[red edge] (A3) -- (A4);
        \draw[red edge] (A1) -- (A2);
        \draw[blue edge] (A0) -- (A2) -- (A4) -- cycle;
        \draw[blue edge] (A4) -- (A1);
        \draw[blue edge] (A2) -- (A3);

        \draw[red edge] (B0) -- (A1) -- (B2) -- cycle;
        \draw[red edge] (A1) -- (B3);
        \draw[red edge] (B2) -- (B1);
        \draw[blue edge] (B0) -- (B1) -- (B3) -- cycle;
        \draw[blue edge] (B1) -- (A1);

        \draw[red edge] (C0) -- (B2) -- (C1) -- cycle;
        \draw[red edge] (B2) -- (C4);
        \draw[red edge] (C1) -- (C2);
        \draw[blue edge] (C0) -- (C2) -- (C4) -- cycle;
        \draw[blue edge] (C4) -- (C1);
        \draw[blue edge] (C2) -- (B2);

        \draw[red edge] (D0) -- (C1) -- (D2) -- cycle;
        \draw[red edge] (C1) -- (D3);
        \draw[red edge] (D2) -- (D1);
        \draw[blue edge] (D0) -- (D1) -- (D3) -- cycle;
        \draw[blue edge] (D1) -- (C1);

        \draw[red edge] (A2) -- (B3);
        \draw[red edge] (B1) -- (C4);
        \draw[red edge] (C2) -- (D3);

        \draw[blue edge] (B2) -- node[above, text=black, font=\small, yshift=2pt] {$e_{4i-2}$} (B3);
        \draw[blue edge] (B2) -- (Y1) -- (B3);
        \draw[blue edge] (D2) -- node[above, text=black, font=\small, yshift=2pt] {$e_{4i}$} (D3);
        \draw[blue edge] (D2) -- (Y2) -- (D3);

        \draw[red edge] (Y1) -- (Y2);
        \draw[red edge] (Y1) -- (Z) -- (Y2);

        \foreach \p in {A0, A1, A2, A3, A4, B0, B1, B2, B3, C0, C1, C2, C4, D0, D1, D2, D3} {
            \node[vertex] at (\p) {};
        }

        \node[vertex, label={[xshift=-8pt, yshift=-12pt]left:$w_{2i-1}$}] at (Y1) {};
        \node[vertex, label={[xshift=8pt, yshift=-12pt]right:$w_{2i}$}] at (Y2) {};
        \node[vertex, label=right:$z_i$] at (Z) {};

        \node[fill=white, inner sep=1pt] at (0, 0.2) {\Large $A_{4i-3}$};
        \node[fill=white, inner sep=1pt] at (4, 0.2) {\Large $A_{4i-2}$};
        \node[fill=white, inner sep=1pt] at (8, 0.2) {\Large $A_{4i-1}$};
        \node[fill=white, inner sep=1pt] at (12, 0.2) {\Large $A_{4i}$};

        \node at (-3.5, 0.2) {\Huge $\cdots$};
        \node at (15.5, 0.2) {\Huge $\cdots$};

    \end{tikzpicture}
    \end{center}
    \caption{The prescribed coloring of a full segment of a Red variable gadget.}
    \label{fig:valid_full_variable}
\end{figure}

\noindent\textbf{NAE Valuation to Admissible Coloring.~~}
Conversely, suppose that $\sigma$ is an NAE-satisfying valuation of $\varphi$.
For each variable $V$, set $c(V)=\text{Red}$ if $\sigma(V)=\top$ and $c(V)=\text{Blue}$ otherwise.
We prescribe an edge-coloring as follows; \cref{fig:valid_full_variable} illustrates one segment when $c(V)=\text{Red}$.
Fix a variable gadget $V$ of length $\ell$.
For each internal clique $A_j$, where $2\leq j\leq 4\ell$, let $x_j$ be the unique vertex of $A_j$ outside $\{v_{j-1},u'_{j-1},v_j,u_j\}$.
Within $A_j$, color the triangle $v_{j-1}v_jx_j$ and the two horns $v_{j-1}u'_{j-1}$, $v_ju_j$ with color $c(V)$.
Color the five complementary edges of $A_j$ with color $\neg{c(V)}$.
These form the complementary bull, whose horns are $e_j=u'_{j-1}v_j$ and $u_jv_{j-1}$.
For each of the boundary cliques $A_1$ and $A_{4\ell+1}$, choose complementary bull colorings in which the unique clique-edge belonging to its adjacent inside triangle is a $c(V)$-colored horn.
Such a coloring exists because, by relabeling a complementary bull decomposition of $K_5$, any prescribed edge can be made a horn of either chosen color.
Color every cross edge $u_ju'_j$ with $c(V)$, so every inside triangle is monochromatic in color $c(V)$.
For each segment $S_i$, color the two remaining edges of each of its attachment triangles $Q_{2i-1}$ and $Q_{2i}$ with $\neg{c(V)}$, and color every edge of its central triangle $P_i$ with $c(V)$; the base edges $e_{4i-2}$ and $e_{4i}$ already have color $\neg{c(V)}$.
This specifies the color of every edge of $G_\varphi$.
The local variable pattern appears in \cref{fig:valid_full_variable}, and \cref{fig:exact_color_structure} shows how the prescribed colors meet around a clause gadget.

\begin{figure}[t]
    \centering
    \begin{tikzpicture}

        \coordinate (VTL) at (-1, 1);   %
        \coordinate (VTR) at (1, 1);    %
        \coordinate (VBR) at (1, -1);   %
        \coordinate (VBL) at (-1, -1);  %

        \coordinate (TT) at (0, 3);     %
        \coordinate (TR) at (3, 0);     %
        \coordinate (TB) at (0, -3);    %
        \coordinate (TL) at (-3, 0);    %

        \coordinate (OTT_L) at (-1, 5); %
        \coordinate (OTT_R) at (1, 5);  %
        
        \coordinate (ORT_T) at (5, 1);  %
        \coordinate (ORT_B) at (5, -1); %
        
        \coordinate (OBT_L) at (-1, -5);%
        \coordinate (OBT_R) at (1, -5); %
        
        \coordinate (OLT_T) at (-5, 1); %
        \coordinate (OLT_B) at (-5, -1);%

        \coordinate (OTL_T) at (-1.5, 3.5); %
        \coordinate (OTL_L) at (-3.5, 1.5); %
        
        \coordinate (OTR_T) at (1.5, 3.5);  %
        \coordinate (OTR_R) at (3.5, 1.5);  %
        
        \coordinate (OBR_B) at (1.5, -3.5); %
        \coordinate (OBR_R) at (3.5, -1.5); %
        
        \coordinate (OBL_B) at (-1.5, -3.5);%
        \coordinate (OBL_L) at (-3.5, -1.5);%

        \draw[black edge] (VTL) -- (VTR) -- (VBR) -- (VBL) -- cycle;

        \draw[blue edge] (VTL) -- (TT) -- (VTR) -- cycle; %
        \draw[blue edge] (VTR) -- (TR) -- (VBR) -- cycle; %
        \draw[red edge]  (VBR) -- (TB) -- (VBL) -- cycle; %
        \draw[blue edge] (VBL) -- (TL) -- (VTL) -- cycle; %

        \draw[red edge]  (TT) -- (OTT_L) -- (OTT_R) -- cycle; %
        \draw[red edge]  (TR) -- (ORT_T) -- (ORT_B) -- cycle; %
        \draw[blue edge] (TB) -- (OBT_L) -- (OBT_R) -- cycle; %
        \draw[red edge]  (TL) -- (OLT_T) -- (OLT_B) -- cycle; %

        \draw[red edge]  (VTL) -- (OTL_T) -- (OTL_L) -- cycle; %
        \draw[red edge]  (VTR) -- (OTR_T) -- (OTR_R) -- cycle; %
        \draw[red edge]  (VBR) -- (OBR_B) -- (OBR_R) -- cycle; %
        \draw[blue edge] (VBL) -- (OBL_B) -- (OBL_L) -- cycle; %

        \foreach \pos in {VTL, VTR, VBR, VBL, 
                          TT, TR, TB, TL, 
                          OTT_L, OTT_R, ORT_T, ORT_B, OBT_L, OBT_R, OLT_T, OLT_B, 
                          OTL_T, OTL_L, OTR_T, OTR_R, OBR_B, OBR_R, OBL_B, OBL_L}
            \node[vertex] at (\pos) {};

    \end{tikzpicture}
    \caption{The prescribed coloring around a clause containing three Blue variables and one Red variable. The central triangles have the colors of their variables, while the corresponding attachment triangles have the opposite colors.}
    \label{fig:exact_color_structure}
\end{figure}

We now prove that the prescribed coloring is admissible.
By symmetry between Red and Blue, it suffices to prove that the subgraph formed by the Red edges is a triangular forest.
Let $G_{\mathrm{Red}}$ be this subgraph, and let $B_{\mathrm{Red}}$ be obtained from $G_{\mathrm{Red}}$ by deleting all edges of $P_i^V$ for every $V\in\Vars$ and $i\in[|O_V|]$, and then omitting isolated vertices.
We first show that $B_{\mathrm{Red}}$ is a triangular forest.

Fix a variable $V$.
Suppose first that $c(V)=\text{Red}$.
Within every clique $A_j$, the Red edges induce a bull, which consists of one triangular block and two pendant edges.
Before adding the cross edges, the Red bulls in consecutive cliques meet only at the corresponding shared vertex $v_j$.
Their union is therefore a triangular forest.
For each $j\in[4\ell]$, the two clique-edges of the inside triangle $T_j$ are Red horns in their respective bulls; this follows from the explicit prescription for the internal cliques and from the choice of the bull colorings in the two boundary cliques.
The Red cross edge $u_ju'_j$ completes these two horn edges to the triangle $T_j$.
Consequently, the nontrivial blocks of the resulting Red core are precisely the triangle blocks of the Red bulls and the inside triangles $T_j$.
Any unused horn in a boundary clique remains a bridge.
The pairwise-distinct condition in the internal cliques ensures that consecutive inside triangles are vertex-disjoint.
Thus, the Red core of $V$ is a triangular forest.
The attachment triangles of $V$ are Blue, so they contribute no edge to $B_{\mathrm{Red}}$.

Now suppose that $c(V)=\text{Blue}$.
No cross edge of an inside triangle then belongs to $B_{\mathrm{Red}}$.
The Red bulls in consecutive cliques meet only at their shared vertex, and hence their union is a triangular forest.
Moreover, every attachment triangle of $V$ is Red.
Its base edge is one of the edges $e_{4i-2}$ or $e_{4i}$, which is a horn, and hence a bridge, of the corresponding Red bull.
Adding the two Red edges incident with the fresh tip turns this bridge into the triangular block $Q_{2i-1}$ or $Q_{2i}$ and creates no other block.
It follows that the Red core together with all Red attachment triangles of $V$ is a triangular forest.

The only identifications between distinct variable gadgets identify a vertex of the form $z_i^V$ with a tip of another variable gadget.
After the edges of the central triangles are deleted, $z_i^V$ has no incident edge contributed by the gadget of $V$.
Thus, these identifications do not join two edge-containing variable pieces of $B_{\mathrm{Red}}$.
It follows that $B_{\mathrm{Red}}$ is a disjoint union of triangular forests and is therefore itself a triangular forest.

It remains to add the Red central triangles.
Fix a clause $C$ and use the notation $a_r,b_r,z_r,V_r,i_r$ from its construction.
For every $r\in[4]$, set
\[
P_r\coloneqq P_{i_r}^{V_r}=a_rb_ra_{r+1}.
\]
For every $r\in[4]$, the triangle $P_r$ is Red exactly when $c(V_r)=\text{Red}$.
Since $\sigma$ NAE-satisfies $C$, at least one variable of $C$ is Red and at least one is Blue.
Therefore, the Red indices decompose into nonempty proper maximal intervals in the cyclic ordering of the clause.

Let $I$ be one such interval, and let $p$ and $q$ be its first and last indices in cyclic order.
The graph $K_I\coloneqq\bigcup_{r\in I}P_r$ is a chain of triangles: consecutive triangles share exactly the vertex $a_{r+1}$, and nonconsecutive triangles are vertex-disjoint.
In particular, $K_I$ is a triangular forest.
No vertex of $K_I$ other than its terminal vertex $a_{q+1}$ is incident with a Red edge of $B_{\mathrm{Red}}$.
Indeed, the vertices $a_r$ and $b_r$ for $r\in I$ are tips of attachment triangles belonging to Red variables, so those attachment triangles are Blue.
On the other hand, maximality of $I$ gives $c(V_{q+1})=\text{Blue}$.
Hence, the attachment triangle incident with the tip $a_{q+1}$ is Red and belongs to $B_{\mathrm{Red}}$.
Thus, $K_I$ meets $B_{\mathrm{Red}}$ in exactly the single vertex $a_{q+1}$.

Distinct maximal intervals give disjoint triangle chains outside $B_{\mathrm{Red}}$, and every occurrence belongs to exactly one clause.
Consequently, $G_{\mathrm{Red}}$ is obtained from $B_{\mathrm{Red}}$ by repeatedly taking 1-sums with the triangular forests $K_I$.
Since triangular forests are closed under 1-sums, $G_{\mathrm{Red}}$ is a triangular forest.
The same argument, with Red and Blue interchanged, proves that the Blue subgraph is also a triangular forest.
Therefore, the prescribed coloring is admissible.

\bigskip
Together with the reverse direction, this shows that $\varphi$ is NAE-satisfiable if and only if $G_\varphi$ admits a decomposition of its edge set into two triangular forests.
The construction is computable in polynomial time.
In particular, it uses
\[
\sum_{V\in\Vars}\bigl(4|O_V|+1\bigr)=16|\Clauses|+|\Vars|
\]
copies of $K_5$ and only linearly many additional vertices and edges.
Hence this is a polynomial-time many-one reduction.

It remains to verify the maximum-degree bound.
A shared vertex $v_j$ has four neighbors in $A_j$ and four neighbors in $A_{j+1}$, and these eight neighbors are distinct because $A_j\cap A_{j+1}=\{v_j\}$.
It is not incident with a cross edge.
If $j\in\{4i-2,4i\}$ for some $i$, then $v_j$ is also an endpoint of one attachment base and gains one additional neighbor, namely the corresponding tip.
Thus, every vertex $v_j$ has degree at most $9$.

Every vertex $u_j$ or $u'_j$ has four clique-neighbors and one cross-edge neighbor.
A vertex $u'_j$ may additionally be an endpoint of one attachment base, so these vertices have degree at most $6$.
Every remaining clique vertex has degree $4$.
Before the clause identifications, each tip $w_j$ has degree $4$ and each vertex $z_i$ has degree $2$.
Each clause identification merges a $z$-vertex with a tip from a different variable gadget; their neighborhoods are disjoint, so the resulting vertex has degree $6$.
No other degree is changed by the identifications.
Consequently, the maximum degree of $G_\varphi$ is at most $9$, with degree $9$ attained by the vertices $v_{4i-2}$ and $v_{4i}$.
This concludes the proof of \cref{thm:main}.
\qed{}

\section{Conclusions}

We proved that \textsc{Triangular-Forest Decomposition} is \NP-complete for every fixed $k\ge 2$.
For $k\ge 3$, this follows already from the general theorem of Lee, Liu, and Tsai~\cite{outerthickness}; the contribution of this paper is the case $k=2$.
The most important open problem is whether the same phenomenon holds for other graph classes covered by~\cref{thm:llt}.
In particular, the complexity of deciding whether a graph has outerthickness at most $2$ remains open for general graphs.
Proving that this problem is \NP-hard would, via the parameter-preserving reduction of~\cite[Theorem~6]{BHMOVW24}, also prove that deciding whether the uncrossed number of a graph is at most $2$ is \NP-hard.

\bibliographystyle{alphaurl}
\bibliography{lit}

\end{document}